\newtheorem{theorem}{Theorem}[section]
\newtheorem{corollary}[theorem]{Corollary}
\newtheorem{lemma}[theorem]{Lemma}
\newtheorem{proposition}[theorem]{Proposition}
\theoremstyle{definition}
\theoremstyle{remark}
\numberwithin{equation}{section}
\begin{document}
\title{On Various Parameters of  $\mathbb{Z}_q$-Simplex codes for an even
integer q
 \thanks{The first author would like to gratefully acknowledge the UGC-RGNF[Rajiv Gandhi National Fellowship], New Delhi for providing fellowship
 and the second author was supported by a grant(SR/S4/MS:588/09) 
 for the Department of Science and Technology, New Delhi.}}

\author{P. Chella Pandian \\
Research Scholar,\\
Department of Mathematics,\\ 
School of Mathematical Sciences\\
Bharathidasan University\\
Tiruchirappalli Tamil Nadu 620 024 India\\
{Email: chellapandianpc@gmail.com} 
\bigskip\\
C.~Durairajan\\
Assistant Professor,\\
Department of Mathematics,\\ 
School of Mathematical Sciences\\
Bharathidasan University\\
Tiruchirappalli Tamil Nadu 620 024 India\\
Email: durairajanc@gmail.com
\hfill \\
\hfill \\
\hfill \\
\hfill \\
{\bf Proposed running head:} On Various Parameters of  $\mathbb{Z}_q$-Simplex codes\\ for an even
integer q}
\date{}
\maketitle

\newpage

\vspace*{0.5cm}
\begin{abstract}
In this paper, we defined the $\mathbb{Z}_q$-linear codes and discussed its various parameters. We constructed $\mathbb{Z}_q$-Simplex code and
$\mathbb{Z}_q$-MacDonald code and found its parameters. We have given a lower and  an upper bounds
of its covering radius for q is an even integer.
\end{abstract}
\vspace*{0.5cm}

{\it Keywords:} Codes over finite rings, $\mathbb{Z}_q$-linear code, $\mathbb{Z}_q$-Simplex code,
$\mathbb{Z}_q$-MacDonald code, Covering radius.\\
{\it 2000 Mathematical Subject Classification:} Primary: 94B25, Secondary: 11H31
\vspace{0.5cm}
\vspace{1.5cm}

\noindent
Corresponding author:\\ 
\\
 \hspace{1cm} 
Dr. C. Durairajan\\
 \\ 
\noindent
Assistant Professor,\\
Department of Mathematics, \\
Bharathidasan University\\
Tiruchirappalli, Tamil Nadu, India, Pin - 620 024\\

\noindent
E-mail: cdurai66@rediffmail.com
\newpage


\section{Introduction}

\quad A code C is a subset of $\mathbb Z_{q}^{n},$ where $\mathbb Z_{q}$ is the  set
of  integer modulo q and n is any positive integer.
 Let $x, y\in\mathbb{Z}_{q}^{n},$ then the distance between $x$ and $y$ is the number of
coordinates in which they differ. It is denoted by $d(x,y).$
Clearly $d(x,y)=wt(x-y),$ the number of non-zero coordinates in $x-y.$ wt(x) is called {\it weight of x}. The minimum distance d of C is
defined by
$$d=\min\{d(x,y)\ \mid x,y\in C \ and\ x\neq y \}.$$ 

The minimum weight of C is $\min\{wt(c) \mid c \in C \text{ and } c\ne 0\}.$
A code of length n cardinality M with minimum distance d over $\mathbb Z_{q}$ is called (n, M, d) q-ary code. For basic
results on coding theory, we refer \cite{mac}.

We know that  $\mathbb Z_{q}$ is a group under addition modulo q. Then $\mathbb Z_{q}^{n}$ is a
group under coordinatewise addition modulo q. A subset C of $\mathbb Z_{q}^{n}$ is said to be a
{\it q-ary code}. If C is a subgroup of $\mathbb Z_{q}^{n},$ then C is called a {\it $\mathbb
Z_{q}$-linear code}.  Some authors are called this code as 
{\it modular code} because $\mathbb Z_{q}^{n}$ is a module over the ring $\mathbb Z_{q}.$ In fact, it is a free $\mathbb Z_{q}$-module.
Since $\mathbb Z_{q}^{n}$ is  a free $\mathbb Z_q$-module, it has a basis. Therefore, every
$\mathbb Z_q$-linear code has a basis. Since $\mathbb Z_{q}$ is finite, it is finite dimension.

Every k dimension $\mathbb Z_{q}$-linear code with length n and minimum distance d is called
{\it $[n,k,d]$ $\mathbb Z_q$-linear code}. A matrix whose rows are a basis elements of the
$\mathbb{Z}_q$-linear code  is called a {\it generator matrix} of C. There are many researchers doing
research on code over finite rings \cite{bu99}, \cite{dgh99},\cite{dghsr99},\cite{dhs99},\cite{el03},
\cite{gupta}, \cite{vsr96}. In the last decade, there are many researchers doing research on codes over $\mathbb Z_{4}$
\cite{aghos99}, \cite{bgl99},  \cite{bsbm97},  \cite{cs93}, \cite{hkcss94}.\\
\hspace*{.7cm}In this correspondence, we concentrate on code over $\mathbb Z_{q}$ where q is even. We constructed some new codes and obtained its
various parameters and its covering radius. In particular, we defined $\mathbb Z_{q}$-Simplex code, $\mathbb Z_{q}$-MacDonald code and studied  its
various parameters. Section $2$ contains basic results for the $\mathbb Z_{q}$-linear codes 
and  we constructed some $\mathbb Z_{q}$-linear code and given its parameters. $\mathbb
Z_{q}$-Simplex code and perfect code are given in section $3$ and
finally, section $4$ we determined the covering radius of these codes and $\mathbb Z_{q}$-MacDonald
code.

\section{$\mathbb Z_q$-Linear code}

\quad  Let C be a $\mathbb Z_{q}$-linear code. If $x,y\in C,$ then $x-y\in C.$
Let us consider the minimum distance of C is
$d = \min\{d(x,y) \mid x,y \in C \text{ and } x \neq y \}.$ Then
$$d=\min\{wt(x-y) \mid x,y \in C \ \text{and } x\neq y \}.$$  
Since C is $\mathbb Z_{q}$-linear code and $x,y \in C$, $x-y \in C.$ Since $x
\neq y,$ $\min\{wt(x-y) \mid x,y \in C \ \text{ and }
x\neq y \} =\min\{wt(c) \mid c \in C \ \text{ and }c \neq 0 \}.$  Thus, we
have
\begin{lemma} In a $\mathbb Z_{q}$-linear code, the minimum distance is the same as the
minimum weight.
\end{lemma}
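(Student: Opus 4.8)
The plan is to prove that for a $\mathbb{Z}_q$-linear code $C$, the minimum distance equals the minimum weight, exploiting the subgroup (linearity) structure.

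The plan is to translate the statement into an equality of two sets and then take minima. First I would recall the elementary identity $d(x,y)=wt(x-y)$, which lets me rewrite the minimum distance as $d=\min\{wt(x-y)\mid x,y\in C,\ x\neq y\}$. The entire content of the lemma is then that the collection of weights appearing here coincides with the collection of weights of nonzero codewords, so that the two minima agree.

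The core step I would carry out is to show that the two indexing sets are literally the same set of codewords, namely
$$\{x-y\mid x,y\in C,\ x\neq y\}=\{c\in C\mid c\neq 0\}.$$
For the forward inclusion, if $x,y\in C$ with $x\neq y$, then linearity (i.e. $C$ being a subgroup of $\mathbb{Z}_q^n$) gives $x-y\in C$, and $x\neq y$ forces $x-y\neq 0$; hence $x-y$ is a nonzero codeword. For the reverse inclusion, I would use that a subgroup contains the identity, so $0\in C$; given any nonzero $c\in C$, choosing $x=c$ and $y=0$ yields two distinct codewords with $x-y=c$. This establishes equality of the two sets.

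Once the set equality is in hand, the conclusion is immediate: minimizing $wt(\cdot)$ over the left-hand set equals minimizing over the right-hand set, so $d=\min\{wt(c)\mid c\in C,\ c\neq 0\}$, which is exactly the minimum weight of $C$. I expect no genuine obstacle here; the only point requiring care is invoking both closure under subtraction and the presence of $0\in C$, both of which are guaranteed precisely because $C$ is a $\mathbb{Z}_q$-linear code (a subgroup) rather than an arbitrary subset. This is also why the hypothesis of linearity cannot be dropped.
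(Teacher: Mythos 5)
Your proof is correct and follows essentially the same route as the paper: rewrite $d$ via $d(x,y)=wt(x-y)$ and use the subgroup structure to identify the set of differences of distinct codewords with the set of nonzero codewords. You are in fact slightly more careful than the paper, which only argues the forward inclusion and leaves implicit that every nonzero $c\in C$ arises as a difference (via $c=c-0$).
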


Let q be an even integer and let $x,y\in \mathbb Z_{q}^{n}$ such that $x_{i},y_{i} \in \{0,
\frac{q}{2}\},$ then $x_{i}\pm y_{i}\in \{0, \frac{q}{2}\}.$
\begin{lemma}
 Let q be an integer even. If $x,y\in \mathbb Z_{q}^{n}$ such that $x_{i},y_{i} \in \{0,
\frac{q}{2}\},$ then the coordinates of $x\pm y$ are either 0 or $\frac{q}{2}.$
\end{lemma}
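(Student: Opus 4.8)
The plan is to reduce the vector statement to a coordinatewise check, since both $x+y$ and $x-y$ are computed entry by entry in $\mathbb{Z}_q^n$. Fixing an index $i$, I would note that by hypothesis $x_i, y_i \in \{0, q/2\}$, so there are only four possible pairs of values to consider, and it suffices to verify that $x_i \pm y_i \pmod q$ lands in $\{0, q/2\}$ in each case. This is precisely the scalar remark stated just before the lemma, so the lemma follows once that coordinate-level claim is established.

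For the addition case I would compute directly: $0+0=0$, $0 + q/2 = q/2$, and $q/2 + q/2 = q \equiv 0 \pmod q$, where the last reduction uses that $q$ is even, so that $q/2$ is an integer and $2\cdot(q/2)=q$. For the subtraction case the key point is that $\{0, q/2\}$ is closed under negation modulo $q$: indeed $-q/2 \equiv q/2 \pmod q$ since $q/2 + q/2 = q \equiv 0$. Hence $x_i - y_i = x_i + (-y_i)$ reduces back to the addition case and again yields a value in $\{0, q/2\}$.

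A cleaner way to package the whole argument is to observe that $\{0, q/2\}$ is a subgroup of the additive group $\mathbb{Z}_q$ — it is the unique subgroup of order two, isomorphic to $\mathbb{Z}_2$ — and a subgroup is by definition closed under the group operation and under inversion, hence under subtraction. I expect no genuine obstacle here; the only point requiring care is to make explicit the use of the evenness of $q$, which is exactly what guarantees that $q/2$ is an element of $\mathbb{Z}_q$ and that $2\cdot(q/2)\equiv 0 \pmod q$. Either the explicit four-case verification or the subgroup observation completes the proof.
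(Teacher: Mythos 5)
Your proof is correct and follows the same route as the paper, which simply asserts the coordinatewise closure $x_i \pm y_i \in \{0, \frac{q}{2}\}$ in the sentence preceding the lemma without further justification. You merely fill in the four-case verification (and the subgroup observation) that the paper leaves implicit.
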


Now, we construct a new code and discuss its parameters. Let C be an [n, k, d] $\mathbb
Z_q$-linear code. Define
$$D=\{c0c\cdots c + \alpha({\bf 0} 1 {\bf 1}{\bf 2}\cdots {\bf q-1}) \mid \alpha \in  \mathbb Z_q,
c \in C \text{ and } {\bf i}=ii\cdots i \in \mathbb Z_q^n\}.$$
Then, $D=\{c0c\cdots c, c0c\cdots c + {\bf 0} 1 {\bf 1}{\bf 2}\cdots {\bf q-1},  c0c\cdots c +
2({\bf 0} 1 {\bf 1}{\bf 2}\cdots {\bf q-1})\\,\cdots, c0c\cdots c + (q-1)({\bf 0} 1 {\bf 1}{\bf
2}\cdots {\bf q-1})  \mid c \in C \text{ and } \bf{i} \in \mathbb Z_q^n\}.$
Since any $\mathbb{Z}_q$-linear combination of D is again an
element in D, therefore the minimum distance of D is $d(D) = \min\{wt(c0c\cdots c), wt( c0c\cdots c
+ {\bf 0} 1 {\bf 1}{\bf 2}\cdots {\bf q-1} ),\\ wt(c0c\cdots c +
2({\bf 0} 1 {\bf 1}{\bf 2}\cdots {\bf q-1})), \cdots , wt(c0c\cdots c +
(q-1)({\bf 0} 1 {\bf 1}{\bf 2}\cdots {\bf q-1}))
\\ \mid  c \in C \text{ and } \bf{i} \in \mathbb Z_q^n\}.$

Clearly $\min\{wt(c0c\cdots c)  \mid c \in C \& c\ne 0\} \geq qd.$

Let $c\in C.$ Let us take c has $r_{i}\ i's$ where $i = 0,1,2,\cdots, q-1.$ Then for $1 \leq i
\leq q-1,$
$$wt(c+{\bf i})=\sum_{j=0}^{q-1}r_{j} -r_{q-i}.$$
That is $wt(c+{\bf i})=n -r_{q-i}.$
 Therefore
 \begin{eqnarray*}
 wt(c0c\cdots c + {\bf 0} 1 {\bf 1}{\bf 2}\cdots {\bf q-1})& = &wt(c + {\bf 0})+1+ wt(c
+ {\bf 1})+ wt(c+ {\bf 2})\\ && +\cdots + wt(c+{\bf q-1}) \\
&=& n-r_0+1+n-r_{q-1}+n-r_{q-2}+ \cdots \\&& +n-r_1\\
&= &(q-1)n+1
 \end{eqnarray*}
 Similarly, for every  integer i which is relatively prime to q
 \begin{eqnarray*}
 wt((c0c\cdots c) + i({\bf 0} 1 {\bf 1}{\bf 2}\cdots {\bf q-1}))& = &(q-1)n+1
 \end{eqnarray*}

\noindent For other i's
\begin{eqnarray*}
\min\limits_{i\in\mathbb Z_{q}} \{wt(c0c\cdots c + i({\bf 0} 1{\bf 1}{\bf 2}\cdots {\bf q-1}))\} & =& wt(c + {\bf 0})+1\\&&+
wt(c\cdots c+\frac{q}{2}({\bf 1}{\bf 2}\cdots {\bf q-1})) \\
 & =& wt(c + {\bf 0})+1\\&&+ wt(c\cdots c+({\bf \frac{q}{2}}{\bf 0} {\bf \frac{q}{2}}{\bf 0} \cdots
{\bf \frac{q}{2}}{\bf 0}{\bf \frac{q}{2}})) \\
 & =& \frac{q}{2}wt(c + {\bf 0})+1+ \frac{q}{2}wt(c+ {\bf \frac{q}{2}})\\
& =& \frac{q}{2}(n-r_0)+1+ \frac{q}{2}(n-r_{\frac{q}{2}})\\
\min\limits_{i\in\mathbb Z_{q}}\{wt(c0c\cdots c + i({\bf 0} 1{\bf 1}{\bf 2}\cdots {\bf q-1}))\} & =& \frac{q}{2}n+1+ \frac{q}{2}(n-r_0-r_{\frac{q}{2}})\\
\end{eqnarray*}
Hence, $d(D)= \min\{qd,(q-1)n+1,\frac{q}{2} n+1+\frac{q}{2}(n-r_0-r_{\frac{q}{2}})\}.$ Thus, we have
 \begin{theorem} Let C be an $[n, k, d]$ $\mathbb Z_{q}$-linear code, then the\\ 
 $D=\{c0c\cdots c + \alpha({\bf 0} 1 {\bf 1}{\bf 2}\cdots {\bf q-1}) \mid \alpha \in  \mathbb Z_q,
c \in C \text{ and } {\bf i}=ii\cdots i \in \mathbb Z_q^n\}$ is a $[qn+1,k+1,d(D)]$ $\mathbb Z_{q}$-linear code.
\end{theorem}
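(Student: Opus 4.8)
The plan is to exhibit $D$ as a $\mathbb{Z}_q$-submodule of $\mathbb{Z}_q^{qn+1}$ assembled from two pieces, read off its length and a basis directly, and then invoke Lemma 2.1 together with the weight computation preceding the statement to pin down the minimum distance. Write $\phi\colon C \to \mathbb{Z}_q^{qn+1}$ for the map $\phi(c) = c0c\cdots c$ (the codeword $c$ repeated $q$ times, with a single $0$ inserted after the first copy) and set $v = {\bf 0} 1 {\bf 1}{\bf 2}\cdots {\bf q-1}$. Since concatenation and repetition act coordinatewise, $\phi$ is $\mathbb{Z}_q$-linear, so $\phi(C)$ is a submodule; likewise $\mathbb{Z}_q v$ is the cyclic submodule generated by $v$. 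By definition $D = \{\phi(c) + \alpha v\} = \phi(C) + \mathbb{Z}_q v$, and the sum of two submodules is a submodule, so $D$ is $\mathbb{Z}_q$-linear. Counting coordinates ($n$ for the first block, one for the inserted symbol, and $n$ for each of the remaining $q-1$ blocks) gives length $n + 1 + (q-1)n = qn+1$.

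For the dimension, the key idea is that the single inserted coordinate ``sees'' the scalar $\alpha$. If $\{b_1,\dots,b_k\}$ is a basis of $C$, I claim $\{\phi(b_1),\dots,\phi(b_k),v\}$ is a basis of $D$. These elements generate $D$ by linearity of $\phi$. For independence, suppose $\sum_i \beta_i \phi(b_i) + \alpha v = 0$; the middle coordinate of every $\phi(b_i)$ is $0$ while that of $v$ is $1$, so reading off that coordinate forces $\alpha = 0$, after which $\phi\bigl(\sum_i \beta_i b_i\bigr) = 0$, and since the first block of $\phi$ reproduces its argument we get $\sum_i \beta_i b_i = 0$, whence all $\beta_i = 0$. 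The same middle-coordinate argument shows that $(c,\alpha)\mapsto \phi(c)+\alpha v$ is injective, so $|D| = q^{k+1}$ and $D$ has dimension $k+1$.

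It remains to identify $d(D)$. By Lemma 2.1 the minimum distance equals the minimum nonzero weight, which is exactly the quantity computed in the paragraphs before the theorem. I would organize that computation by the value of $\alpha$: for $\alpha = 0$ the word is $q$ copies of $c$, of weight $q\,wt(c) \ge qd$; for $\alpha$ coprime to $q$, the identity $wt(c+{\bf i}) = n - r_{q-i}$ together with the fact that multiplication by a unit permutes the residues yields weight $(q-1)n+1$; and for the remaining nonzero $\alpha$ the minimum is attained at $\alpha = \tfrac{q}{2}$. Taking the minimum of the three expressions gives the stated $d(D)$, and since $D$ is linear this minimum weight is the minimum distance.

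The main obstacle is this last case: one must check that among the non-unit, nonzero scalars the weight is genuinely smallest at $\alpha = \tfrac{q}{2}$. This is where evenness of $q$ enters, via Lemma 2.2: the products $\tfrac{q}{2}{\bf j}$ collapse into $\{{\bf 0},{\bf \frac{q}{2}}\}$ according to the parity of $j$, so exactly $\tfrac{q}{2}$ of the $q$ blocks become $c$ and $\tfrac{q}{2}$ become $c + {\bf \frac{q}{2}}$, producing $\tfrac{q}{2}n + 1 + \tfrac{q}{2}(n - r_0 - r_{q/2})$. Confirming that no other non-unit scalar gives a smaller value, while keeping track of which residues $r_i$ actually occur in a fixed codeword $c$, is the delicate bookkeeping step; the unit and $\alpha = 0$ cases are then routine.
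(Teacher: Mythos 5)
Your proposal proves the theorem as literally stated, and on the structural side it is in fact more complete than the paper's own treatment: the paper never justifies linearity beyond the one-line assertion that any $\mathbb{Z}_q$-linear combination of elements of $D$ lies in $D$, and it gives no argument at all for the dimension being $k+1$, whereas your decomposition $D=\phi(C)+\mathbb{Z}_q v$ together with the middle-coordinate observation (the $(n+1)$-st coordinate vanishes on $\phi(C)$ and equals $1$ on $v$) settles both points cleanly. For the minimum distance you follow the same route as the paper: Lemma 2.1, then a case split on $\alpha$ being zero, a unit, or a nonzero non-unit, with the unit case handled via $wt(c+{\bf i})=n-r_{q-i}$ and the non-unit case reduced to $\alpha=\frac{q}{2}$ via Lemma 2.2. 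The step you flag as ``delicate bookkeeping''---that among nonzero non-units the weight is minimized at $\alpha=\frac{q}{2}$---is a genuine gap, but it is the paper's gap as well: the text simply writes a minimum over $i$ and jumps to $i=\frac{q}{2}$ with no justification. In fact the claim fails in general. For $g=\gcd(\alpha,q)$ the weight of $c0c\cdots c+\alpha({\bf 0}1{\bf 1}\cdots{\bf q-1})$ equals $qn+1-g\sum_{t\equiv 0\ (\mathrm{mod}\ g)}r_t$, and for $q=6$ with $c=(2,2,\dots,2)$ (which lies in any code containing the all-ones word) the scalar $\alpha=2$ gives weight $4n+1$ while $\alpha=3=\frac{q}{2}$ gives $6n+1$. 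Since the theorem's third parameter is left as the symbol $d(D)$, this does not invalidate your proof of the parameters $[qn+1,k+1,d(D)]$; but it does mean that neither you nor the paper has actually established the closed form $d(D)=\min\{qd,(q-1)n+1,\frac{q}{2}n+1+\frac{q}{2}(n-r_0-r_{\frac{q}{2}})\}$, which is what is later fed into Corollary 2.4 and the $\mathbb{Z}_q$-Simplex parameters. If you want to close the gap, you would need to minimize $qn+1-g\sum_{t\equiv 0\ (\mathrm{mod}\ g)}r_t$ over all divisors $g$ of $q$ with $1<g\le q$ and over all $c\in C$, rather than evaluating only at $g=\frac{q}{2}$.
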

 If there is a codeword $c\in C$ such that it has only 0 and $\frac{q}{2}$ as coordinates,
then \[
\begin{array}{ccl}
 wt(c0c\cdots c+{\bf0\frac{q}{2}\frac{q}{2}0\frac{q}{2}\cdots 0 \frac{q}{2}})&=&wt(c+0)+1+wt(c+\frac{q}{2})+wt(c\\&&+ 0)+ \cdots+w(c+\frac{q}{2})\\
                                                                             &=&1+r_{\frac{q}{2}}+r_{0}+r_{\frac{q}{2}}+\cdots+r_{0}\\
                                                                             &=&\frac{q}{2}(r_{0}+r_{\frac{q}{2}})+1\\
 wt(c0c\cdots c+{\bf0\frac{q}{2}\frac{q}{2}0\frac{q}{2}\cdots 0 \frac{q}{2}})&=&\frac{q}{2}n+1.
\end{array}\]
Hence, $d(D)=\min\{qd,\frac{q}{2} n+1\}.$  Thus, we have
\begin{corollary}\label{cor1}
If there is a  $c \in C$ such that $c_{i}=0 \text{ or } \frac{q}{2}$ and if $n \leq 2d -1,$ then
$d(D)=\frac{q}{2} n+1.$
\end{corollary}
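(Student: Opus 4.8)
The plan is to combine the weight identity derived immediately before the statement with the new hypothesis $n \le 2d-1$. The computation preceding the corollary already shows that when $C$ contains a codeword $c$ whose coordinates all lie in $\{0,\frac{q}{2}\}$, the reduced value is $d(D)=\min\{qd,\,\frac{q}{2}n+1\}$. Here the term $(q-1)n+1$ coming from the preceding theorem has been discarded because $q-1\ge \frac{q}{2}$ for every even $q\ge 2$, and the general third term $\frac{q}{2}n+1+\frac{q}{2}(n-r_0-r_{q/2})$ collapses to $\frac{q}{2}n+1$ once $r_0+r_{q/2}=n$ (which is exactly what it means for $c$ to have only $0$ and $\frac{q}{2}$ as entries). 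So the only thing left to verify is that the hypothesis $n\le 2d-1$ selects $\frac{q}{2}n+1$ as the minimum.

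First I would rewrite $n \le 2d-1$ as $d \ge \tfrac{n+1}{2}$ and multiply through by $q$ to get $qd \ge \tfrac{q}{2}(n+1) = \tfrac{q}{2}n + \tfrac{q}{2}$. Since $q$ is even and at least $2$, we have $\tfrac{q}{2}\ge 1$, whence $qd \ge \tfrac{q}{2}n+1$. Thus $qd$ dominates $\frac{q}{2}n+1$ in the minimum, and therefore $d(D)=\min\{qd,\,\frac{q}{2}n+1\}=\frac{q}{2}n+1$, as claimed.

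I do not expect a genuine obstacle: the corollary is essentially an arithmetic specialisation of the preceding theorem, and the only care needed is tracking the directions of the two inequalities—namely that both $(q-1)n+1$ and $qd$ dominate $\frac{q}{2}n+1$ under the stated hypotheses. The one point I would verify explicitly is the boundary case $q=2$, where $\tfrac{q}{2}=1$ and the inequality $qd\ge\frac{q}{2}n+1$ reads $2d\ge n+1$, i.e. precisely $n\le 2d-1$; this shows the estimate is tight at the boundary and that no strict-versus-weak inequality issue creeps in.
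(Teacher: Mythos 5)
Your proposal is correct and follows essentially the same route as the paper: the text immediately preceding the corollary already establishes $d(D)=\min\{qd,\frac{q}{2}n+1\}$ when some codeword has all coordinates in $\{0,\frac{q}{2}\}$, and the paper (implicitly) concludes as you do by noting that $n\le 2d-1$ forces $qd\ge \frac{q}{2}(n+1)\ge \frac{q}{2}n+1$. Your explicit check of the arithmetic, including the boundary case, simply makes precise the step the paper leaves unstated.
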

\section{$\mathbb Z_q$-Simplex codes}
\quad  Let G be a matrix over $\mathbb Z_{q}$ whose columns are one
non-zero element from each 1-dimensional submodule of $\mathbb Z_{q}^{2}.$ Then this matrix is
equivalent to $$G_{2}=\left[\begin{array}{c|c|ccccc}
0&1& 1&2 & \cdots & q-1  \\\hline
1&0& 1& 1 & \cdots &1 \end{array}\right].$$
Clearly $G_{2}$ generates $[q+1, 2, \frac{q}{2}+1]$ code. Inductively, we define
$$G_{k+1}= \left[\begin{array}{c|c|c|c|c|c}
 0 0 \cdots 0 & 1      &1 1 \cdots 1& 2 2\cdots 2 & \cdots & q-1q-1 \cdots q-1 \\\hline
              & 0      &            &             &        &\\
         G_{k}& \vdots & G_{k}      &G_{k}        &\cdots  &G_{k}\\
              & 0      &            &             &        &\\
            \end{array}\right] $$
for $k\geq 2.$ Clearly this $G_{k+1}$ matrix
generates $[n_{k+1}=\frac{q^{k+1}-1}{q-1},\ k+1,\ d]$ code. We call this code as {\it $\mathbb Z_q$- Simplex code}.
This type of k-dimensional code is denoted by $S_{k}(q).$ For simplicity, we denote it by $S_k.$

\begin{theorem}
 $S_{k}(q)$ is  $[n_{k}=\frac{q^{k}-1}{q-1},\ k,\ \frac{q}{2} n_{k-1}+1]$ code.
\end{theorem}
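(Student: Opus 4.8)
The plan is to induct on $k$. The base case is $k=2$, already established in the excerpt: $G_2$ generates a $[q+1,2,\frac{q}{2}+1]$ code, and since $n_1=\frac{q-1}{q-1}=1$ the value $\frac{q}{2}n_1+1=\frac{q}{2}+1$ matches. Assume $S_k$ is an $[n_k,k,\frac{q}{2}n_{k-1}+1]$ code and examine the block matrix $G_{k+1}$. Counting columns block by block gives length $n_k+1+(q-1)n_k=qn_k+1$, which equals $n_{k+1}=\frac{q^{k+1}-1}{q-1}$ after using $n_k=\frac{q^k-1}{q-1}$. For the dimension, evaluate a $\mathbb{Z}_q$-combination $a_0r_0+\sum_{i=1}^k a_i r_i$ of the rows on the isolated column $(1,0,\dots,0)^{T}$: only the top row is nonzero there, forcing $a_0=0$; the remaining combination read off the first block is $\mathbf{a}G_k$, which vanishes only for $\mathbf{a}=0$ by the inductive linear independence of the rows of $G_k$. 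Hence the $k+1$ rows form a basis and the dimension is $k+1$.

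The heart of the matter is the minimum weight, which equals the minimum distance by Lemma 2.1. Write a message as $(a_0,\mathbf{a})$ with $\mathbf{a}\in\mathbb{Z}_q^k$ and set $w=\mathbf{a}G_k\in S_k$. The associated codeword has the form $(\,w\mid a_0\mid w+a_0\mathbf{1}\mid w+2a_0\mathbf{1}\mid\cdots\mid w+(q-1)a_0\mathbf{1}\,)$, where $a_0 j\mathbf{1}$ means adding the constant $a_0 j$ to every coordinate. If $r_i$ counts the coordinates of $w$ equal to $i$, then a coordinate of $w+a_0 j\mathbf{1}$ is zero precisely when the matching coordinate of $w$ equals $-a_0 j$; summing the block weights and the isolated coordinate gives total weight $q\,wt(w)$ when $a_0=0$ and $qn_k+1-\sum_{j=0}^{q-1}r_{-a_0 j}$ when $a_0\neq 0$.

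I would then split into three cases and bound each below by $\frac{q}{2}n_k+1$. If $a_0=0$ and $w\neq 0$, the inductive distance yields $q\,wt(w)\ge q(\tfrac{q}{2}n_{k-1}+1)=\tfrac{q^2}{2}n_{k-1}+q$, which exceeds $\tfrac{q}{2}n_k+1=\tfrac{q^2}{2}n_{k-1}+\tfrac{q}{2}+1$ since $q\ge 2$. If $a_0$ is a unit, $j\mapsto-a_0 j$ permutes $\mathbb{Z}_q$, so $\sum_j r_{-a_0 j}=n_k$ and the weight is $(q-1)n_k+1$. If $a_0\neq 0$ is a non-unit with $g=\gcd(a_0,q)$, the map $j\mapsto-a_0 j$ lands in the order-$q/g$ subgroup $g\mathbb{Z}_q$, hitting each element $g$ times, so $\sum_j r_{-a_0 j}=g\sum_{s\in g\mathbb{Z}_q}r_s\le g n_k$ and the weight is at least $(q-g)n_k+1\ge\frac{q}{2}n_k+1$, because the largest proper divisor of an even $q$ is $q/2$. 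For achievability I take $a_0=\frac{q}{2}$ and $w=\frac{q}{2}c'$ for any $c'\in S_k$ having an odd coordinate (such $c'$ exists since $G_k$ has a $1$ entry); then every coordinate of $w$ lies in $\{0,\frac{q}{2}\}$, so $r_0+r_{q/2}=n_k$, $g=\frac{q}{2}$, and the weight is exactly $qn_k+1-\frac{q}{2}n_k=\frac{q}{2}n_k+1$.

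The main obstacle is the non-unit case: one has to recognize that adding a constant multiple of the all-ones vector collapses the coordinate-count sum onto a proper subgroup of $\mathbb{Z}_q$, and that the smallest resulting weight is controlled by the largest proper divisor $q/2$—which is exactly where evenness of $q$ enters. Pinning down achievability, namely that a codeword supported on $\{0,\frac{q}{2}\}$ genuinely occurs in $S_k$ and that $a_0=\frac{q}{2}$ realizes the bound, is the remaining delicate point that links the minimum distance to the even-$q$ hypothesis.
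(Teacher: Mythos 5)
Your proof is correct, and while it shares the paper's overall skeleton (induction on $k$, plus a case analysis of the weight of $a_0(\text{top row})+\mathbf{a}G_k$ according to whether $a_0$ is zero, a unit, or a nonzero zero--divisor), it is genuinely more self-contained and rigorous than the paper's argument. The paper proves the theorem by repeatedly invoking its Corollary 2.4, which in turn rests on the Section 2 computation of $d(D)$; that computation simply \emph{asserts} that the minimum of $wt(c0c\cdots c+i(\mathbf{0}1\mathbf{1}\cdots\mathbf{q-1}))$ over non-unit $i$ is attained at $i=q/2$, without justification. Your gcd/subgroup count --- that $j\mapsto -a_0j$ covers the subgroup $g\mathbb{Z}_q$ exactly $g$ times, giving weight at least $(q-g)n_k+1\ge\frac{q}{2}n_k+1$ since a proper divisor $g$ of $q$ satisfies $g\le q/2$ --- supplies exactly the missing step, and makes transparent where evenness of $q$ is used (only for achievability, via $g=q/2$). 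Your achievability witness $w=\frac{q}{2}c'$ is also cleaner than the paper's device of tracking a $\{0,\frac{q}{2}\}$-valued codeword and the inequality $n_{k-1}\le 2d_{k-1}-1$ through the induction (the paper spells out $k=3$ and $k=4$ separately before the general step); indeed even $w=0$ with $a_0=\frac{q}{2}$ already attains $\frac{q}{2}n_k+1$, so your requirement that $c'$ have an odd coordinate is harmless but unnecessary. Finally, you verify the length and rank claims, which the paper states without proof. In short: same strategy, but your version closes a real gap in the non-unit case and streamlines the attainment argument.
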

\begin{proof}
We prove this theorem by induction on k. For $k = 2,$ from the generator matrix $G_2,$ it is clear
that $d = \frac{q}{2}+1$ and the theorem is true. Since there is a codeword $c = 0\frac{q}{2}\frac{q}{2}
0\frac{q}{2} \cdots 0 \frac{q}{2} 0 \frac{q}{2} \in S_2$ and 
$n = q + 1 \leq 2(\frac{q}{2} +1) - 1 = 2d-1 ,$ by Corollary \ref{cor1} implies $d(S_3) =
\frac{q}{2}n_2+1  $ and hence the $S_3$ is $[n_3=\frac{q^3 -1}{q-1}, 3, \frac{q}{2}n_2+1 ]$
code. Since $c0c\cdots c + \frac{q}{2}({\bf 0} 1 {\bf 1}{\bf 2}\cdots {\bf q-1}) \in S_3$ whose
coordinates are either 0 or $\frac{q}{2}$ and satisfies the conditions of the Corollary \ref{cor1},
therefore $d(S_4) = \frac{q}{2}n_3+1 $ and hence the
$S_4$ is $[n_4=\frac{q^4 -1}{q-1}, 4, \frac{q}{2}n_3+1 ]$ code. By induction we can assume that
this theorem is true for all less
than k. That is, there is a code $c \in S_{k-1}$ whose coordinates are either 0 or $\frac{q}{2}$ and $n_{k-1}
\leq 2d_{k-1} - 1.$ By Corollary \ref{cor1}, $d_k = \frac{q}{2} n_{k-1} +1.$
Therefore $S_{k}(q)$ is an $[\frac{q^{k}-1}{q-1},\ k,\ \frac{q}{2} n_{k-1}+1] \ \mathbb Z_q$-linear code.
Thus we proved.
\end{proof}

This code seems to be  a Simplex code over finite field  but not because the parameters differ. From the matrix $G_k$, no two
columns are linearly dependent. Therefore the minimum distance of its dual is greater than or
equal to 3. Since in the first block of the matrix $G_k,$ there are three columns
 whose transpose matrix are $(0, 0, \cdots , 0, 1), (0, 0, \cdots ,0, 1, 0), (0, \\0, \cdots ,0, 1,
1).$ These are linearly dependent. Therefore, the minimum distance of the dual code is less than
or equal to 3. Hence the dual of $S_k$ is $[n_k=\frac{q^k-1}{q-1}, n_k-k,3]$   $\mathbb
Z_{q}$-linear code. One can check easily that the spheres of radius 1 around codewords cover
the whole space $\mathbb Z_{q}^{n}$ and the spheres are disjoint. Hence the dual of this code is
perfect. This is, this code is equivalent to the q-ary Hamming codes because any code with these parameters
$(n=\frac{q^k-1}{q-1},q^{n-k},3)$ is equivalent to the q-ary Hamming code\cite{mac}.

\section{Covering radius}
\quad  The {\em covering radius} of a code $C$ over  $\mathbb Z_{q}$ with respect to the Hamming distance
$d$ is given by
$$R(C)= \max_{u \in \mathbb Z_{q}^{n}}\left\{\min_{c \in C}\left\{d(u,c)\right\}\right\}.$$
It is easy to see that $R(C)$ is the least positive integer $r$ such that
$${\mathbb Z_{q}^{n}}= \cup_{c \in  C} S_{r}(c)$$ where
$$ S_{r}(u)=\left\{ v \in \mathbb Z_{q}^{n}\} \mid d(u,v) \leq r\right\}$$ for
any $u \in { \mathbb Z_{q}^{n} }.$

\begin{proposition}\cite{survey}\label{append}
If appending( puncturing) r number of columns in a code C, then the covering radius of C is
increased( decreased ) by r.
\end{proposition}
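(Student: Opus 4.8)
The plan is to reduce everything to a single column and to one operation, since appending and puncturing one coordinate are mutually inverse: writing $C'$ for the code obtained by deleting the last coordinate of $C$, the code $C$ is recovered from $C'$ by appending that coordinate back. Hence it suffices to relate $R(C)$ and $R(C')$ for one deleted column; the appending statement is then the same chain of inequalities read in reverse, and general $r$ follows by performing the single-column step $r$ times in succession.

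For the puncturing step I write each word of $\mathbb{Z}_q^{n}$ as $(v,a)$ with $v\in\mathbb{Z}_q^{\,n-1}$, $a\in\mathbb{Z}_q$, and each codeword as $(c,c_n)$. First, $R(C')\le R(C)$: given $v\in\mathbb{Z}_q^{\,n-1}$, lift it to $(v,0)$, pick $(c,c_n)\in C$ with $d((v,0),(c,c_n))\le R(C)$, and note that deleting a coordinate cannot increase Hamming distance, so $d(v,c)\le R(C)$ with $c\in C'$; taking the maximum over $v$ gives the claim. Second, $R(C)\le R(C')+1$: for arbitrary $(v,a)$ choose $c'\in C'$ nearest to $v$ and lift it to a codeword $(c',b)\in C$; then $d((v,a),(c',b))=d(v,c')+d(a,b)\le R(C')+1$, and maximizing over $(v,a)$ gives the bound. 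Together these yield $R(C)-1\le R(C')\le R(C)$, so one puncture lowers the covering radius by $0$ or $1$; iterating gives $R(C)-r\le R(C^{(r)})\le R(C)$, and the inverse reading gives $R(C)\le R(\widetilde{C})\le R(C)+r$ for $r$ appended columns.

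The hard part is upgrading these two-sided bounds to the exact change by $r$ that the statement asserts, because a single puncture need not lower $R$ at all. I would settle this in the case actually needed here, where the appended (deleted) columns are zero columns, so that the extended code is $\widetilde{C}=\{(c,0)\mid c\in C\}$. For any $(v,a)\in\mathbb{Z}_q^{n}$ the last coordinate decouples, giving $d((v,a),\widetilde{C})=d(v,C)+d(a,0)$; since $d(a,0)=1$ for $a\neq 0$, choosing $v$ with $d(v,C)=R(C)$ together with any $a\neq 0$ (possible as $q\ge 2$) forces $R(\widetilde{C})=R(C)+1$. Applying this $r$ times produces exactly $R(C)+r$, and puncturing those same zero coordinates returns exactly $R(C)-r$ by the inverse relationship, which is the equality claimed. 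I would add the caveat that for arbitrary columns only the one-sided bound $R(C)\le R(\widetilde{C})\le R(C)+r$ is available, so the sharp form genuinely relies on the structure of the columns being appended or removed.
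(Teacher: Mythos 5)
The paper offers no proof of this proposition at all --- it is quoted from the survey \cite{survey} and then used as a black box --- so there is nothing internal to compare your argument against; what you have written is a self-contained proof, and it is correct as far as the true content of the statement goes. Your single-column reduction, the two inequalities $R(C')\le R(C)$ and $R(C)\le R(C')+1$ for one puncture, and the iteration to $R(C)\le R(\widetilde{C})\le R(C)+r$ for $r$ appended columns are all sound and are the standard way this fact is established. You are also right to flag that the proposition as literally worded (covering radius \emph{increased by} $r$, as an equality) is false in general: a puncture can leave the covering radius unchanged, so only the one-sided bounds hold for arbitrary columns. This is not a defect of your proof but of the statement; note that when the paper actually invokes Proposition \ref{append} (in the proof of Theorem \ref{repetition}) it says the covering radius ``is increased by at most $(q-1-\phi(q))n$,'' i.e.\ it uses exactly the inequality version you proved.

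One small caution about your attempt to rescue the equality: the special case you analyze, appending all-zero columns so that $R(\widetilde{C})=R(C)+r$ exactly, is \emph{not} the case the paper needs. In Theorem \ref{repetition} the columns appended to the unit-repetition code $A$ are the zero-divisor blocks $o_1,\dots,o_r$, which are nonzero, so the decoupling identity $d((v,a),\widetilde{C})=d(v,C)+d(a,0)$ does not apply there and no exact increment is available (nor is one used). So the honest summary is: your proof establishes the inequality form, which is all that is true in general and all that the paper uses; the equality claimed in the statement should simply be read as ``increased (decreased) by at most $r$.''
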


\begin{proposition}\label{prop} \cite{mat}
If $ C_0$ and $ C_1$ are codes over $\mathbb Z_{q}^{n}$  generated by matrices $G_0$ and $G_1$
respectively and if $ C$ is the code generated by
\[
G =  \left( \begin{array}{c|c}
0 & G_1 \\\hline
G_0 & A
\end{array}\right),
\]
then $r( C) \leq r( C_0) + r( C_1)$ and  the covering radius of $C$ 
satisfy the following
\[
r( C) \geq r( C_0) + r( C_1).
\]
  Since the covering radius of C generated by
\[
G =  \left( \begin{array}{c|c}
0 & G_1 \\\hline
G_0 & A
\end{array}\right),
\] is greater than or equal to $r(C_{0})+r(C')$
where $C_{0}$ and $C'$ are codes generated by   $ \left[ \begin{array}{c}
 0 \\\hline
 G_0
\end{array}\right] =  \left[ \begin{array}{c}
 G_0
\end{array}\right]$ and $ \left[ \begin{array}{c}
 G_1 \\\hline
 A
\end{array}\right],$ respectively, this implies $r( C) \geq r( C_0) + r( C_1)$ because $C_{1}$ is a subcode of the
code 
 $ C'.$
\end{proposition}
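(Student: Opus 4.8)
The plan is to use the block-triangular shape of $G$ together with the additivity of the Hamming distance over a splitting of the $n=n_0+n_1$ coordinates into the first $n_0$ positions (the $G_0$-block) and the last $n_1$ positions (the $G_1$-block); for words $(x,y),(x',y')\in\mathbb{Z}_q^{n_0}\times\mathbb{Z}_q^{n_1}$ one has $d((x,y),(x',y'))=d(x,x')+d(y,y')$. First I would record the shape of a generic codeword: writing a message as $(a,b)$, where $a$ scales the top rows and $b$ the bottom rows, gives $(a,b)G=(bG_0,\,aG_1+bA)$. Thus the first block of any codeword is an arbitrary element $bG_0\in C_0$, and for each fixed $b$ the second block runs through the coset $bA+C_1$ as $a$ varies; in particular every second block lies in the length-$n_1$ code $C'$ generated by $\left[\begin{smallmatrix}G_1\\ A\end{smallmatrix}\right]$, and $C_1\subseteq C'$.

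For the upper bound I would cover greedily in two stages. Given an arbitrary target $(x,y)$, first pick $b$ so that $bG_0$ is a closest $C_0$-codeword to $x$, so that $d(x,bG_0)\le r(C_0)$. Holding this $b$ fixed, pick $a$ so that $aG_1$ is a closest $C_1$-codeword to the shifted word $y-bA$, so that $d(y,\,aG_1+bA)=d(y-bA,\,aG_1)\le r(C_1)$. Adding the two contributions by additivity yields a codeword within distance $r(C_0)+r(C_1)$ of $(x,y)$; as $(x,y)$ was arbitrary this proves $r(C)\le r(C_0)+r(C_1)$.

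For the lower bound I would argue from deep holes. Choose $x_0$ with $d(x_0,C_0)=r(C_0)$ and $y_0$ with $d(y_0,C')=r(C')$. For any codeword $c=(bG_0,\,aG_1+bA)$, additivity gives $d((x_0,y_0),c)=d(x_0,bG_0)+d(y_0,\,aG_1+bA)\ge r(C_0)+r(C')$, since $bG_0\in C_0$ and $aG_1+bA\in C'$. Minimizing over $c$ shows that $(x_0,y_0)$ is at distance at least $r(C_0)+r(C')$ from $C$; as $r(C)$ is the maximum over all words of this covering distance, $r(C)\ge r(C_0)+r(C')$. The concluding step is to pass from $C'$ back to $C_1$ via the inclusion $C_1\subseteq C'$, which is exactly what the proposition invokes.

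The genuinely delicate point, and the main obstacle, is this lower bound. The difficulty is the coupling created by the off-diagonal block $A$: the second block $aG_1+bA$ depends on the same $b$ already spent on the first block, so the two blocks cannot be minimized independently over $C_0$ and $C_1$ separately. The device that resolves this is to replace $C_1$ by $C'=\langle G_1,A\rangle$, which contains every possible second block regardless of $b$; the inclusion $C_1\subseteq C'$ is then the hinge of the comparison, and one should note that it degenerates to a clean equality $r(C)=r(C_0)+r(C_1)$ precisely in the uncoupled case $A=0$, where $C=C_0\oplus C_1$ and $C'=C_1$.
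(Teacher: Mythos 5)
Your two-stage covering argument for the upper bound and your deep-hole argument for $r(C)\ge r(C_0)+r(C')$ are both correct, and the latter makes precise exactly the mechanism the paper gestures at. The genuine gap is the ``concluding step'' you defer to the inclusion $C_1\subseteq C'$: that step fails, and it fails in the paper's own justification for the same reason. For a subcode $C_1\subseteq C'$ one has $d(y,C')\le d(y,C_1)$ for every word $y$, hence $r(C')\le r(C_1)$; so from $r(C)\ge r(C_0)+r(C')$ you can only deduce something \emph{weaker} than $r(C)\ge r(C_0)+r(C_1)$. The inequality you would need to finish, $r(C')\ge r(C_1)$, is precisely backwards, and no rearrangement of your (correct) deep-hole computation repairs this: the second block of a codeword ranges over the coset $bA+C_1$, and once $b$ is coupled to the first block you cannot force the second-block distance up to $r(C_1)$, only up to $r(C')$.

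Indeed the claimed lower bound is false as stated. Over $\mathbb{Z}_2$ take $G_0=G_1=[1\;1]$ and $A=[0\;1]$, so that $C_0=C_1=\{00,11\}$ with $r(C_0)=r(C_1)=1$, while $C=\{0000,\,0011,\,1101,\,1110\}$ has covering radius $1<2=r(C_0)+r(C_1)$; here $C'=\mathbb{Z}_2^2$ has $r(C')=0$ and your bound $r(C)\ge r(C_0)+r(C')=1$ is exactly tight. So the honest output of your argument is $r(C)\ge r(C_0)+r(C')$, which is what should stand in place of the stated lower bound; your closing observation about the uncoupled case $A=0$ correctly identifies the situation $C'=C_1$ in which the upper and lower bounds actually meet. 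Note also that the weakened lower bound affects the later theorems, which invoke this proposition only through the (valid) upper bound, so the damage is confined to the proposition itself.
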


A $q$-ary repetition code $\ C$ over a finite field $\mathbb F_q$ with q elements is
an $[n,1,n]$ linear code. The covering
radius of $\ C$ is $\lfloor \frac{n(q-1)}{q} \rfloor$ \cite{duraithesis96}. For basic results
on covering radius, we refer to \cite{survey}, \cite{further}. Now, we consider the
repetition
code over $\mathbb Z_{q}$. There are two types of repetition codes.
\begin{enumerate}
 \item[Type I.] Unit repetition  code  generated by
$G_u=[\overbrace{u u \ldots u}^{n}]$ where $u$ is an unit element of $\mathbb Z_{q}.$  This matrix
generates $ C_u$ is $[n,1,n]$ $\mathbb{Z}_q$-linear code. That is, $C_{u}$ is (n, q, n) q-ary repetition code. We call this as {\it unit repetition code.}
\item[Type II.]   Zero divisor repetition code is generated by the
matrix\\  $G_v=[\overbrace{ v v \ldots v}^{n} ]$ where $v$ is a zero divisor in $\mathbb Z_{q}.$ That
is, $v$ is not a relatively prime to q. This is an  $(n,\frac{q}{v},n)$ code over $\mathbb{Z}_q.$
This
 code is denoted by $C_v.$ This code is called {\it zero divisor repetition code}.
\end{enumerate}

\hspace{.7cm}With respect to the Hamming distance the covering radius
 of $C_1\ is \lfloor \frac{n(q-1)}{q}\rfloor$ \cite{duraithesis96} but clearly the covering radius
of $C_v$ is n because code symbols appear in this code are zero divisors only. Thus, we
have
\begin{theorem}
$R( C_v)=n$ and $R( C_u)=\lfloor\frac{(q-1)n}{q}\rfloor.$
\end{theorem}
Let $\phi(q) = \# \{ i \mid 1 \leq i \textless \ q \ \& \ (i,q) = 1\}$ be the Euler $\phi$-function. Let
$U = \{  i\in\mathbb{Z} \mid 1 \leq i \textless \ q \ \&\ (i,q) = 1\}$ be the set of all units in $\mathbb Z_{q}$ and let
$O = \mathbb Z_{q} \setminus U$ be the set which contains all zero divisors and 0.
Let C be a $\mathbb Z_{q}$-linear code generated by the matrix
$$[ \overbrace{1 1 \ldots 1 }^{n}\overbrace{2 2 \ldots 2 }^{n} \cdots\overbrace{q-1 q-1 \ldots q-1 }^{n}],$$ then this code is equivalent to a code whose
generator matrix is $[u_1u_1\cdots \\ u_1 u_2u_2\cdots u_2 \cdots
u_{\phi(q)}u_{\phi(q)}\cdots u_{\phi(q)} o_1o_1 \cdots o_1 o_2o_2 \cdots o_2  \cdots
o_ro_r \cdots o_r]$\\ where $r = q -1 -\phi(q).$
Let A be a code equivalent to the unit repetition code of length $\phi(q)n$ generated by
$[u_1u_1\cdots u_1u_2u_2\cdots u_2 \cdots u_{\phi(q)}u_{\phi(q)}\cdots \\ u_{\phi(q)}],$ then by the
above theorem, $R(A) =  \lfloor\frac{(q-1)\phi(q)n}{q}\rfloor.$ Let B be a code equivalent to the
zero divisor repetition code  of length $(q-1-\phi(q))n$ generated by $[o_1o_1 \cdots o_1 o_2o_2
\cdots o_2 \cdots o_ro_r \cdots o_r],$ then by the above theorem,
$R(B) = (q-1-\phi(q))n.$ By Proposition \ref{prop}, $R(C) \geq
\lfloor\frac{(q-1)\phi(q)n}{q}\rfloor+(q-1-\phi(q))n.$

 Without loss of generality we can assume that the generator matrix of A
as $[111 \cdots 1].$  Since $R(A) = \lfloor\frac{(q-1)\phi(q)n}{q}\rfloor$ and C is obtained by
appending some $(q-1-\phi(q))n$ columns to A, by Proposition \ref{append} the covering radius of C
is increased by at most $(q-1-\phi(q))n.$ Therefore, $R(C) \leq
\lfloor\frac{(q-1)\phi(q)n}{q}\rfloor +(q-1-\phi(q))n.$
Thus, we have
\begin{theorem}{\label{repetition}}
 Let C be a $\mathbb Z_{q}$-linear code generated by the matrix
$[ \overbrace{1 1 \ldots 1 }^{n}\\ \overbrace{2 2 \ldots 2 }^{n} \cdots\overbrace{q-1 q-1 \ldots q-1 }^{n}].$ Then C is a $[(q-1)n,1, \frac{q}{2}n]$ $\mathbb
Z_{q}$-linear code with $R(C)=\lfloor\frac{(q-1)\phi(q)n}{q}\rfloor+(q-1-\phi(q))n.$
\end{theorem}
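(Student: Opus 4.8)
The plan is to verify the three parameters in turn, treating length and dimension as immediate and concentrating on the minimum distance and the covering radius. The length is clearly $(q-1)n$, and since the single generator row $g$ contains the unit entry $1$, the $q$ scalar multiples $\lambda g$ ($\lambda \in \mathbb{Z}_q$) are pairwise distinct, so the dimension is $k=1$.

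For the minimum weight I would compute the weight of a general nonzero codeword $\lambda g$ directly. In the block of $i$'s the symbol is $\lambda i \bmod q$, which vanishes exactly when $\frac{q}{d}\mid i$, where $d=\gcd(\lambda,q)$; among $i\in\{1,\dots,q-1\}$ there are precisely $d-1$ such indices, so $\mathrm{wt}(\lambda g)=n(q-d)$. Minimizing the weight therefore amounts to maximizing $\gcd(\lambda,q)$ over $\lambda\neq 0$, i.e. choosing the largest proper divisor of $q$; since $q$ is even this is $q/2$ (attained at $\lambda=q/2$), giving minimum weight $n(q-q/2)=\frac{q}{2}n$. Because minimum distance equals minimum weight for a linear code, this is $d(C)=\frac{q}{2}n$.

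For the covering radius I would split the coordinates of $g$ according to whether the repeated symbol is a unit or a zero divisor. After a column permutation the generator reads $[u_1\cdots u_1\,u_2\cdots u_2\cdots u_{\phi(q)}\cdots u_{\phi(q)}\,o_1\cdots o_1\cdots o_r\cdots o_r]$ with $r=q-1-\phi(q)$, each symbol repeated $n$ times. Let $A$ be the restriction to the $\phi(q)n$ unit columns and $B$ the restriction to the $(q-1-\phi(q))n$ zero-divisor columns. Scaling each unit column by its inverse is a distance-preserving (monomial) equivalence turning $A$ into the all-ones repetition code of length $\phi(q)n$, so by the repetition-code covering-radius result stated above, $R(A)=\lfloor\frac{(q-1)\phi(q)n}{q}\rfloor$. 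For $B$ the key observation is that every codeword symbol $\lambda o_j$ is a zero divisor or $0$, hence never a unit; thus a target word all of whose entries are units disagrees with every codeword of $B$ in every position, forcing $R(B)$ to equal the full length $(q-1-\phi(q))n$.

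The two bounds then pinch. For the lower bound I would apply Proposition~\ref{prop}, or equivalently argue directly: taking a deep hole $u_A$ of $A$ together with an all-units word $u_B$ on the $B$-coordinates, and using that $d(u_B,\lambda g_B)$ is constant (equal to $R(B)$) as $\lambda$ ranges over $\mathbb{Z}_q$, the distance of this combined word from $C$ is $d(u_A,\lambda g_A)+R(B)$, whose minimum over $\lambda$ is $R(A)+R(B)$; hence $R(C)\geq\lfloor\frac{(q-1)\phi(q)n}{q}\rfloor+(q-1-\phi(q))n$. For the upper bound, $C$ is obtained from $A$ by appending the $(q-1-\phi(q))n$ zero-divisor columns, so Proposition~\ref{append} bounds the increase in covering radius by that number, giving $R(C)\leq R(A)+(q-1-\phi(q))n$. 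Equality of the two bounds yields the claimed value. The main obstacle I anticipate is this covering-radius lower bound: one must correctly identify that the zero-divisor block contributes its entire length (via the units-versus-zero-divisors disjointness) and, crucially, that this contribution is a constant offset independent of $\lambda$, which is exactly what allows the combined-code lower bound to meet the appending upper bound.
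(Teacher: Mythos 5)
Your proposal is correct and follows essentially the same route as the paper: split the generator into the unit block $A$ and the zero-divisor block $B$, obtain $R(A)=\lfloor\frac{(q-1)\phi(q)n}{q}\rfloor$ from the repetition-code result and $R(B)=(q-1-\phi(q))n$ from the all-units deep hole, then combine the lower bound of Proposition~\ref{prop} with the column-appending upper bound of Proposition~\ref{append}. Your explicit computation $\mathrm{wt}(\lambda g)=n(q-\gcd(\lambda,q))$ is a welcome addition, since the paper asserts the minimum distance $\frac{q}{2}n$ without proof.
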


Now, we see the covering radius of $ \mathbb Z_{q}$-Simplex  code. The covering radius
of Simplex codes and MacDonald codes over finite field and finite rings were discussed in
\cite{duraithesis96}, \cite{gupta}.
\begin{theorem} For $k\geq 2,$ $R(S_{k+1}) \leq
\frac{(k-1)(q-1)\phi(q)+(q^2-q-\phi(q))(q^{k+1}-q^2)}{q(q-1)^2} \\ \hspace*{6cm}+ R(S_{2})$
\end{theorem}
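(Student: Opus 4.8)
The plan is to prove the bound by induction on $k$, using the recursive structure of the generator matrix $G_{k+1}$ together with Proposition~\ref{prop} to split the covering radius across a block decomposition. The key observation is that $G_{k+1}$ has exactly the block form
\[
G_{k+1}= \left( \begin{array}{c|c}
0 & G_1' \\\hline
G_k & A
\end{array}\right),
\]
where $G_k$ generates $S_k$ (sitting in the lower-left), the single entry $1$ in the top row of the separator column plays the role of $G_1$, and the remaining columns $1 1\cdots 1,\, 2 2\cdots 2,\, \cdots,\, q\!-\!1\,q\!-\!1\cdots q\!-\!1$ (each block of length $n_k$) form the matrix $A$ in the upper-right. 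First I would identify the relevant subcodes so that Proposition~\ref{prop} applies: the code $C_0 = S_k$ and the code $C'$ generated by the top row, which is precisely the repetition-type code of Theorem~\ref{repetition} with block length $n=n_k$. This lets me write $R(S_{k+1}) \le R(S_k) + R(C')$ for the upper bound direction.

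Next I would substitute the value of $R(C')$ from Theorem~\ref{repetition}. With length parameter $n = n_k = \frac{q^k-1}{q-1}$, that theorem gives
\[
R(C') = \left\lfloor\frac{(q-1)\phi(q)\,n_k}{q}\right\rfloor + \bigl(q-1-\phi(q)\bigr)n_k.
\]
Thus the one-step recursion becomes $R(S_{k+1}) \le R(S_k) + R(C')$, and unrolling it from the base case $S_2$ yields
\[
R(S_{k+1}) \le R(S_2) + \sum_{j=2}^{k}\left[\left\lfloor\frac{(q-1)\phi(q)\,n_j}{q}\right\rfloor + \bigl(q-1-\phi(q)\bigr)n_j\right].
\]
The remaining task is purely computational: dropping the floor (bounding $\lfloor x\rfloor \le x$) and evaluating the geometric sum $\sum_{j=2}^{k} n_j = \sum_{j=2}^{k}\frac{q^j-1}{q-1}$ in closed form. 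Carrying out this summation and collecting the two contributions over a common denominator $q(q-1)^2$ should reproduce the stated coefficient $\frac{(k-1)(q-1)\phi(q)+(q^2-q-\phi(q))(q^{k+1}-q^2)}{q(q-1)^2}$, where the $(k-1)$ factor counts the $k-1$ summands and the $(q^{k+1}-q^2)$ factor is the residue of the geometric series.

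The main obstacle I anticipate is verifying that the block decomposition genuinely matches the hypotheses of Proposition~\ref{prop}, i.e.\ that the column permutation bringing $G_{k+1}$ into the displayed form does not disturb the covering radius and that the matrix $A$ in the upper-right block is indeed (equivalent to) the generator of the repetition code $C'$ of Theorem~\ref{repetition}. The floor function also needs care: since the upper bound merely requires $\lfloor x\rfloor \le x$ this is harmless in the direction being proved, but one should confirm that relaxing each floor independently inside the sum is legitimate (it is, because the covering-radius inequality is applied at each level before summing). Once the decomposition is justified, the rest is a routine collapse of a geometric series, so the proof should follow cleanly by induction.
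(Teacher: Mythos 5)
Your overall strategy coincides with the paper's: decompose $G_{k+1}$ via Proposition~\ref{prop} into the left block $\left(\begin{smallmatrix}0\\ G_k\end{smallmatrix}\right)$ generating $S_k$ and a right-hand block whose top row generates a repetition-type code, bound that code's covering radius via Theorem~\ref{repetition}, unroll the recursion down to $S_2$, and sum the geometric series. The gap is in how you identify the top-row code $C'$. The top row of the right-hand block is $[\,1\mid 1\cdots1\mid 2\cdots 2\mid\cdots\mid (q-1)\cdots(q-1)\,]$, of length $1+(q-1)n_k$: it contains the separator column $(1,0,\dots,0)^{T}$ in addition to the $q-1$ repeated blocks of length $n_k$. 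This is \emph{not} ``precisely'' the code of Theorem~\ref{repetition} with $n=n_k$, which has length $(q-1)n_k$. The paper accounts for the extra coordinate with Proposition~\ref{append} (appending one column raises the covering radius by at most $1$), so each step of the recursion reads $R(S_{j+1})\le R(S_j)+1+\lfloor\frac{(q-1)\phi(q)n_j}{q}\rfloor+(q-1-\phi(q))n_j$, and the $k-1$ accumulated $1$'s are essential to the final formula. As written, your inequality $R(S_{k+1})\le R(S_k)+R(C')$ with $R(C')=\lfloor\frac{(q-1)\phi(q)n_k}{q}\rfloor+(q-1-\phi(q))n_k$ assigns to $C'$ the covering radius of a different (shorter) code and is therefore not justified.

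Consequently your closing claim that the summation ``should reproduce the stated coefficient'' fails: carrying out your sum honestly gives numerator $(q^2-q-\phi(q))(q^{k+1}-q^2)-(k-1)(q-1)(q^2-q-\phi(q))$ rather than $(k-1)(q-1)\phi(q)+(q^2-q-\phi(q))(q^{k+1}-q^2)$; the two differ by exactly $(k-1)q(q-1)^2$, i.e.\ by the $k-1$ missing units. In particular, the $(k-1)(q-1)\phi(q)$ term does not arise from ``counting the $k-1$ summands'' of the geometric series, but from combining the $k-1$ separator-column contributions of $+1$ with the $-(k-1)$ left over from $\sum_{j=2}^{k}\frac{q^j-1}{q-1}$. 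Your (smaller) bound would of course imply the theorem if it were established, but its derivation is the unjustified step. The repair is short: at each level, apply Proposition~\ref{append} to the single appended separator column before invoking Theorem~\ref{repetition}, which is exactly what the paper does.
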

\begin{proof} For $k \geq 2,$  $S_{k+1}$ is $[n_{k+1} = \frac{q^{k+1} -1}{q-1}, k+1,
\frac{q}{2}n_k+1]$ $\mathbb
Z_{q}$-linear code. By Proposition \ref{prop} and Theorem \ref{repetition}  give
 \begin{eqnarray*}
R(S_{k+1})& \leq & (1 + \lfloor\frac{(q-1)\phi(q)n_k}{q}\rfloor+(q-1-\phi(q))n_k)+R(S_{k})\\
           & \leq & (1 + \frac{(q-1)\phi(q)n_k}{q}+(q-1-\phi(q))n_k)+R(S_{k})\\
           & \leq & 1 + \frac{q^2-q-\phi(q)}{q}n_k+R(S_{k})\\
R(S_{k+1})& \leq & (1 + \frac{q^2-q-\phi(q)}{q}n_k)+R(S_{k})\\
\end{eqnarray*}
 This implies
 $$R(S_{k}) \leq  (1 + \frac{q^2-q-\phi(q)}{q}n_{k-1})+R(S_{k-1}).$$
 Combining these two, we get
  $$R(S_{k+1})\leq  (1 + \frac{q^2-q-\phi(q)}{q}n_k)+ (1 +
\frac{q^2-q-\phi(q)}{q}n_{k-1})+R(S_{k-1})$$
Similarly, if we continue, we get

  $R(S_{k+1})\leq  (1 + \frac{q^2-q-\phi(q)}{q}n_k)+ (1 +
\frac{q^2-q-\phi(q)}{q}n_{k-1})+ \cdots +\\\hspace*{2.4cm}(1 + \frac{q^2-q-\phi(q)}{q}n_2)+ R(S_{2})$\\
 Since $n_k=\frac{q^{k}-1}{q-1}, \text{ for } k \geq 2,$ therefore
 \begin{eqnarray*}
R(S_{k+1}) & \leq & (k-1)+\frac{q^2-q-\phi(q)}{q}\left(\frac{q^{k}-1}{q-1}+ \frac{q^{k-1}-1}{q-1}+\cdots +
                     \frac{q^{2}-1}{q-1}\right)+ R(S_{2})\\
           & \leq & (k-1)+\frac{q^2-q-\phi(q)}{q}\left(\frac{q^{k}+q^{k-1} + \cdots + q^{2}-(k-1)}{q-1}\right)+
                        R(S_{2})\\
           & \leq & \frac{(k-1)\phi(q)+(q^2-q-\phi(q))((q^{k+1}-1)/(q-1)-(q+1))}{q(q-1)} +  R(S_{2})\\
R(S_{k+1})& \leq &\frac{(k-1)(q-1)\phi(q)+(q^2-q-\phi(q))(q^{k+1}-q^2)}{q(q-1)^2} +  R(S_{2})\\
 \end{eqnarray*}
  Hence proved.
 \end{proof}
 In particular, for  $q=4,$
  $R(S_{k+1})\leq
\frac{5.4^{k+1}+3k-29}{18} \text{ for } k \geq 2$ because of simple calculation $R(S_{2})=3.$

 Now, we can define a new code which is similar to the $\mathbb Z_{q}$-MacDonald  code. Let
 $$G_{k,u} = \left(\begin{matrix}
                   G_k \setminus \left(\begin{matrix}
0\\
G_u
\end{matrix}\right)
\end{matrix} \right)$$
 for $2 \leq u \leq k-1$ where 0 is a $(k-u)\times\frac{q^u-1}{q-1}$ zero matrix and $\left(\begin{matrix}
                   A \setminus B
\end{matrix} \right)$ is a matrix obtained from the matrix A by removing the matrix B. The code
generated by $G_{k,u}$ is called {\it $\mathbb Z_{q}$-MacDonald} code. It is denoted by $M_{k,u}.$ The Quaternary
MacDonald codes were discussed in \cite{cg03}.
\begin{theorem} For $k\geq 2$ and $0\leq u \leq k,$\\
$$R(M_{k+1,u}) \leq \frac{(k-r+1)(q-1)\phi(q)+(q^2-q-\phi(q))q^r(q^{k-r+1}-1)}{q(q-1)^2}$$ \hspace*{3cm}$+ R(M_{r,u}),\ \mbox{for}\ u\leq  r \leq k.$
\end{theorem}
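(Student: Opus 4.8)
The plan is to mirror the inductive, telescoping argument that established the upper bound for the Simplex codes $S_{k+1}$ in the preceding theorem, but now applied to the MacDonald codes $M_{k+1,u}$. The key structural observation is that, just as $S_{k+1}$ was built from $S_k$ by appending a repetition-type block governed by the generator matrix $G_{k+1}$, the MacDonald code $M_{k+1,u}$ carries an analogous recursive block decomposition inherited from $G_{k,u}$. First I would identify, from the definition $G_{k,u}=\bigl(G_k \setminus \bigl(\begin{smallmatrix}0\\G_u\end{smallmatrix}\bigr)\bigr)$, how the generator matrix of $M_{k+1,u}$ decomposes into a submatrix generating $M_{k,u}$ together with the additional columns contributed by the new coordinate block of $G_{k+1}$. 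This decomposition should be expressible in the upper-triangular block form $\bigl(\begin{smallmatrix}0 & G_1\\ G_0 & A\end{smallmatrix}\bigr)$ required by Proposition~\ref{prop}, with $G_0$ generating (a code equivalent to) $M_{k,u}$ and $G_1$ generating a repetition-type code of the kind analysed in Theorem~\ref{repetition}.

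Once the decomposition is in place, I would apply Proposition~\ref{prop} to obtain the one-step recursion
\begin{equation*}
R(M_{k+1,u}) \leq \Bigl(1 + \tfrac{q^2-q-\phi(q)}{q}\,n_k\Bigr) + R(M_{k,u}),
\end{equation*}
exactly as in the Simplex case, where the term $1 + \tfrac{q^2-q-\phi(q)}{q}n_k$ comes from bounding the covering radius of the appended repetition block via Theorem~\ref{repetition} (using $R(A)=\lfloor\frac{(q-1)\phi(q)n_k}{q}\rfloor$ and the zero-divisor contribution $(q-1-\phi(q))n_k$, then dropping the floor). I would then telescope this recursion down from index $k+1$ to the base index $r$, rather than all the way to $2$: iterating gives
\begin{equation*}
R(M_{k+1,u}) \leq \sum_{j=r}^{k}\Bigl(1 + \tfrac{q^2-q-\phi(q)}{q}\,n_j\Bigr) + R(M_{r,u}).
\end{equation*}
Summing the constant terms yields $(k-r+1)$, and summing $n_j=\frac{q^j-1}{q-1}$ from $j=r$ to $k$ is a geometric series that collapses to $\frac{q^r(q^{k-r+1}-1)}{(q-1)^2}$ after subtracting the $(k-r+1)$ copies of $-1$. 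Combining and placing everything over the common denominator $q(q-1)^2$ produces the stated closed form.

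**The main obstacle** I expect is verifying that the recursion actually bottoms out cleanly at $M_{r,u}$ and that the appended block at each stage genuinely has the repetition structure demanded by Theorem~\ref{repetition}. Unlike the Simplex case, the deletion of the $\bigl(\begin{smallmatrix}0\\G_u\end{smallmatrix}\bigr)$ submatrix perturbs the column multiplicities in the repetition blocks, so I must check that each appended block is still (equivalent to) a concatenation of unit- and zero-divisor repetition codes of the correct lengths, and that the covering-radius bound $1+\frac{q^2-q-\phi(q)}{q}n_k$ survives this deletion rather than requiring a corrected $n_k$. In particular the constraint $u\le r\le k$ is precisely what guarantees that the recursion can be unwound down to a valid $M_{r,u}$ whose defining deletion is still meaningful. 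The arithmetic of the geometric sum and the final common-denominator simplification are routine and I would not belabour them; the genuine content is setting up the block decomposition so that Proposition~\ref{prop} and Theorem~\ref{repetition} apply at every step.
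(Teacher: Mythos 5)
Your proposal follows essentially the same route as the paper: apply Proposition~\ref{prop} with the appended block bounded via Theorem~\ref{repetition} to get the one-step recursion $R(M_{k+1,u}) \leq (1 + \frac{q^2-q-\phi(q)}{q}n_k)+R(M_{k,u})$, telescope down to the index $r$, and collapse the geometric sum of the $n_j$ into the stated closed form. Your extra worry about the deleted $\bigl(\begin{smallmatrix}0\\G_u\end{smallmatrix}\bigr)$ block is easily dispatched (the deletion only affects the $G_0$ part generating $M_{k,u}$, not the appended repetition columns), and the paper itself does not even pause over it.
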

\begin{proof}
 By using, Proposition \ref{prop}, we get
 \begin{eqnarray*}
R(M_{k+1,u})& \leq & (1 + \lfloor\frac{(q-1)\phi(q)n_k}{q}\rfloor+(q-1-\phi(q))n_k)+R(M_{k,u})\\
             & \leq & (1 + \frac{(q-1)\phi(q)n_k}{q}+(q-1-\phi(q))n_k)+R(M_{k,u})\\
             & \leq & 1 + \frac{q^2-q-\phi(q)}{q}n_k+R(M_{k,u})\\
R(M_{k+1,u})& \leq & (1 + \frac{q^2-q-\phi(q)}{q}n_k)+R(M_{k,u})\\
\end{eqnarray*}
 This implies
 $R(M_{k,u}) \leq  (1 + \frac{q^2-q-\phi(q)}{q}n_{k-1})+R(M_{k-1,u}).$
 Combining these two, we get
  $$R(M_{k+1,u})\leq   (1 + \frac{q^2-q-\phi(q)}{q}n_k)+ (1 + \frac{q^2-q-\phi(q)}{q}n_{k-1})+
R(M_{k-1,u})$$
Similarly, if we continue, we get
  \begin{eqnarray*}
R(M_{k+1,u})&\leq &(1 + \frac{q^2-q-\phi(q)}{q}n_k)+ (1 +
\frac{q^2-q-\phi(q)}{q}n_{k-1})\\
 && +\cdots + (1 + \frac{q^2-q-\phi(q)}{q}n_{r})+R(M_{r,u}).
  \end{eqnarray*}
 Since $n_k=\frac{q^{k}-1}{q-1}, \text{ for } k \geq 2,$ therefore
 \begin{eqnarray*}
R(M_{k+1,u}) & \leq &
(k-r+1)+\frac{q^2-q-\phi(q)}{q}\left(\frac{q^{k}-1}{q-1}+ \frac{q^{k-1}-1}{q-1}+\cdots +
\frac{q^{r}-1}{q-1}\right)\\&&+ R(M_{r,u})\\
& \leq &
(k-r+1)+\frac{q^2-q-\phi(q)}{q}\left(\frac{q^{k}+q^{k-1} + \cdots +
q^{r}-(k-r+1)}{q-1}\right)\\&&+ R(M_{r,u})\\
 & \leq &
\frac{(k-r+1)\phi(q)+(q^2-q-\phi(q))q^r(q^{k-r}+q^{k-r-1} + \cdots + 1)}{q(q-1)}\\&&+  R(M_{r,u})\\
 & \leq &
\frac{(k-r+1)\phi(q)+(q^2-q-\phi(q))q^r(q^{k-r+1}-1)/(q-1)}{q(q-1)} \\&& + R(M_{r,u})\\
R(M_{k+1,u})& \leq &
\frac{(k-r+1)(q-1)\phi(q)+(q^2-q-\phi(q))q^r(q^{k-r+1}-1)}{q(q-1)^2} \\&&+ R(M_{r,u}).
 \end{eqnarray*}
 \end{proof}

If $u = k,$ then $$R(M_{k+1,k}) \leq
\lfloor\frac{(q-1)\phi(q)n_k}{q}\rfloor+(q-1-\phi(q))n_k +1 \text{ for } k \geq 2.$$
In the above theorem, if we replace r by $u+1,$ we get
\begin{eqnarray*}
 R(M_{k+1,u})& \leq &
\frac{(k-u)(q-1)\phi(q)+(q^2-q-\phi(q))q^{u+1}(q^{k-u}-1)}{q(q-1)^2} \\
&&+\frac{(q-1)\phi(q)n_u}{q}+(q-1-\phi(q))n_u +1 \text{ for } u \geq 2.
\end{eqnarray*}
Thus, we have
\begin{corollary} For $k \geq 2 \text{ and } k \geq u \geq 2,$
 \begin{eqnarray*}
 R(M_{k+1,u})& \leq &
\frac{(k-u)(q-1)\phi(q)+(q^2-q-\phi(q))q^{u+1}(q^{k-u}-1)}{q(q-1)^2} \\
&&+\frac{(q-1)\phi(q)n_u}{q}+(q-1-\phi(q))n_u +1.
\end{eqnarray*}
\end{corollary}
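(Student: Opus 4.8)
The plan is to obtain this corollary as a direct specialization of the preceding theorem, taking $r = u+1$ in its recursive bound. First I would check that this choice is admissible: the theorem requires $u \le r \le k$, and $r = u+1$ satisfies $r \ge u$ automatically, while $r \le k$ holds precisely when $k > u$. The boundary case $k = u$ must therefore be treated separately, but there it is immediate --- when $k = u$ both the factor $(k-u)$ and the factor $q^{k-u}-1$ vanish, so the asserted bound collapses to $\frac{(q-1)\phi(q)n_u}{q}+(q-1-\phi(q))n_u +1$, which is exactly the $u=k$ estimate for $R(M_{k+1,k})$ displayed just after the theorem.

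For $k > u$ I would substitute $r = u+1$ into the theorem's inequality and carry out the index bookkeeping: the prefactor becomes $k-r+1 = k-u$, the power $q^r$ becomes $q^{u+1}$, and $q^{k-r+1}-1$ becomes $q^{k-u}-1$. This turns the leading fraction into exactly $\frac{(k-u)(q-1)\phi(q)+(q^2-q-\phi(q))q^{u+1}(q^{k-u}-1)}{q(q-1)^2}$ and leaves the residual term $R(M_{u+1,u})$ to be controlled.

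The remaining step is to bound that anchor term. Here $R(M_{u+1,u})$ is precisely the $u=k$ instance of the MacDonald recursion with $k$ replaced by $u$, namely $R(M_{u+1,u}) \le \lfloor\frac{(q-1)\phi(q)n_u}{q}\rfloor+(q-1-\phi(q))n_u +1$; this is legitimate because $u \ge 2$ guarantees that $M_{u+1,u}$ is a genuine MacDonald code in the sense of the construction. Applying $\lfloor x\rfloor \le x$ to discard the floor replaces the first summand by $\frac{(q-1)\phi(q)n_u}{q}$, and adding this to the leading fraction produces precisely the claimed estimate.

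I expect no genuine obstacle in this argument; the content is entirely a matter of matching indices and invoking the base estimate correctly. The only points demanding care are the validity ranges --- confirming $u \ge 2$ so that the anchor code $M_{u+1,u}$ and its covering-radius bound make sense, and isolating the degenerate case $k=u$ in which the recursion does not apply but the inequality holds trivially because its leading fraction is zero.
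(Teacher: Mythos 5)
Your proposal is correct and follows essentially the same route as the paper: substitute $r=u+1$ into the preceding theorem, and bound the residual term $R(M_{u+1,u})$ by the displayed $u=k$ estimate with the floor discarded. Your explicit handling of the boundary case $k=u$ is a detail the paper passes over silently, but otherwise the two arguments coincide.
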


\end{document}